\spnewtheorem{ldefin}[theorem]{Definition}{\bfseries}{\rmfamily}
\spnewtheorem{lconstr}[theorem]{Construction}{\bfseries}{\rmfamily}
\spnewtheorem{lnotation}[theorem]{Notation}{\bfseries}{\rmfamily}
\spnewtheorem{fact}[theorem]{Fact}{\bfseries}{\rmfamily}
\crefname{subsection}{subsection}{subsections}
\newcommand{\cond}{>}
\newcommand{\less}{\preccurlyeq}
\newcommand{\discond}[1]{\,{\cond}^{#1} }
\newcommand{\disless}[1]{\,{\less}^{#1} }
\newcommand{\pr}[1]{\bm{\mathsf{Pr}}(#1)}
\newcommand{\fe}{\models^\exists}
\newcommand{\fum}[1]{\models^\forall_{#1}}
\newcommand{\fem}[1]{\models^\exists_{#1}}
\newcommand{\atm}{\mathsf{Atm}}
\newcommand{\sph}[1]{S(#1)}
\newcommand{\model}{\mathcal{M}}
\newcommand{\val}[1]{\llbracket#1 \rrbracket}
\newcommand{\cpcount}{\emph{cp}-counterfactual\xspace}
\newcommand{\cpcounts}{\emph{cp}-counterfactuals\xspace}
\newcommand{\card}[1]{|#1|}
\newcommand{\weight}[2]{\mathbf{w}_{#1}(#2)}
\newcommand{\preorder}[1]{\preceq_{#1}}
\newcommand{\bleft}{\mathrel{\mathpalette\bleftinn\relax}}
\newcommand{\bleftinn}[2]{%
  \ooalign{%
    \raisebox{.2ex}{$#1\blacktriangleleft$}\cr
    $#1\leq$\cr
  }%
}
\newcommand{\gtruthset}[2]{\llbracket#1 \rrbracket_{#2}}
\newcommand{\disset}[3]{\mathsf{D}(#1,#2)^{#3}}
\newcommand{\agset}[3]{\mathsf{A}(#1,#2)^{#3}}
\newcommand{\upmod}[3]{\mathcal{M}^{#1/#2}_{#3}}
\newcommand{\upw}[3]{W^{#1/#2}_{#3}}
\newcommand{\ups}[3]{S^{#1/#2}_{#3}}
\newcommand{\upv}[3]{V^{#1/#2}_{#3}}
\newcommand{\upsph}[2]{S^{\Gamma/x}_{#1}(#2)}
\newcommand{\upds}[3]{S^{{#1}/{#2}}_{\mathsf{#3}}(#2)}
\newcommand{\modelsm}[1]{\models_{#1}}
\newcommand{\prel}{\preorder{lex}}
\newcommand{\preil}{\preorder{xel}}
\newcommand{\presl}{\prec_{lex}}
\newcommand{\presil}{\prec_{xel}}
\newcommand{\Lcp}{\mathcal{L}^{\discond{}}_{cp}}
\newcommand{\Ll}{\mathcal{L}^{\discond{}}}
\newcommand{\Lcpp}{\mathcal{L}^{\disless{}}_{cp}}
\newcommand{\Llp}{\mathcal{L}^{\disless{}}}
\newcommand{\xelleft}[1]{\bleft_{#1}}
\newcommand{\xellefts}[1]{\blacktriangleleft_{#1}}
\newcommand{\lexleft}[1]{\trianglelefteq_{#1}}
\newcommand{\lexlefts}[1]{\vartriangleleft_{#1}}
\newcommand{\eq}[1]{=_{#1}}
\newcommand{\cplless}[1]{cp^{\less}(#1)}
\newcommand{\VWu}[1]{\mathsf{VW}^\mathsf{#1}}
\newcommand{\VCu}[1]{\mathsf{VC}^\mathsf{#1}}
\newcommand{\VW}{\mathsf{VW}}
\newcommand{\VC}{\mathsf{VC}}
\newcommand{\setdiff}{\!\setminus\!}
\newcommand{\numdate}{\the\day/\the\month/\the\year}
\newcommand{\sep}{,~\!\!}
\renewcommand{\upmod}[3]{\mathcal{M}^{#1\sep#2}}
\renewcommand{\upw}[3]{W^{#1\sep#2}}
\renewcommand{\ups}[3]{S^{#1\sep#2}}
\renewcommand{\upv}[3]{v^{#1\sep#2}}
\renewcommand{\upsph}[2]{S^{\Gamma\sep x}(#2)}
\newcommand{\modelsworld}{\Vdash}
\renewcommand{\fum}[1]{\modelsworld^\forall}
\renewcommand{\fem}[1]{\modelsworld^\exists}
\renewcommand{\modelsm}[1]{\modelsworld}
\newcommand{\validm}[1]{\models_{#1}}
\begin{document}
\title{A significance-based account of \emph{ceteris paribus} counterfactuals
}
%
%
\author{Avgerinos Delkos\inst{1}
\and
Marianna Girlando\inst{2}
}
\authorrunning{Delkos and Girlando}
%
\institute{
University of Birmingham, UK
\and
University of Amsterdam, Netherlands
}
\maketitle              

\begin{abstract}
When evaluating a counterfactual statement, it is often convenient to specify conditions that ought to be kept unchanged. 
Formally, this can be done by associating to each counterfactual a  \emph{ceteris paribus} set of formulas, specifying the facts that ``ought to be kept unchanged''. 
\emph{Ceteris paribus} counterfactuals originate in the debate between D. Lewis and Fine in the 1970s, and have been captured in formal accounts. However, these accounts are merely based on `counting' formulas, and can yield counterintuitive results. 
In this paper, we develop a novel approach to evaluate \emph{ceteris paribus} counterfactuals at (weakly) centered sphere models, by taking into account the `significance' of formulas that ought to be kept unchanged. Hypothetical states that keep the most significant formulas unchanged will be prioritized in the evaluation of a counterfactual. 
We show that the resulting notion of validity  coincides with theoremhood in Lewis' conditional logics $\VC $ or $\VW$.

\end{abstract}

\section{Introduction}
\label{sec:introduction}

\newcommand{\nc}{\emph{nc}\xspace}

Counterfactuals are sentences allowing to 
reason about hypothetical states of affairs. 
For instance: ``{If Nixon had pressed the button, there would have been a nuclear holocaust}''. We will use this counterfactual, due to Fine~\cite{this-is-fine} and known as \emph{Nixon counterexample} (\nc), as our running example.  
Several classes of possible-world models have been introduced to interpret  counterfactuals, among which \emph{sphere models} 
and \emph{preferential models} \cite{lewis1973,burgess1981quick}, both encoding notions of similarity between states. To evaluate a counterfactual $A \discond{} B$ in a world $x$, one checks whether $B$ holds in states that satisfy $A$ and which are most similar to $x$.  

\emph{Ceteris paribus} counterfactuals, with \emph{ceteris paribus} (\emph{cp} for short) meaning ``all other things being equal'', specify additional conditions to be taken into account when evaluating a counterfactual. 
\emph{Cp}-counterfactuals were 
introduced 
by D. Lewis in~\cite{lewis1979counterfactual} as an answer to Fine's critique of sphere models in~\cite{this-is-fine}. 
Fine observed that, in relation to (\nc) above, 
a world in which a small miracle prevents the launch of the missile is more similar to the actual world than a world where a nuclear holocaust occurs. Hence, the counterfactual (\nc) is counterintuitively evaluated as 
false, because at the state closest to the actual world and at which Nixon pushes the button there is no nuclear holocaust (but a small miracle occurs).   
Lewis replied that, to get the intuitively correct evaluation, (\nc) needs to be changed into a \cpcount:  ``If Nixon had pressed the button, \emph{and no miracle occurred}, there would have been a nuclear holocaust''. 
While the  similarity ordering among possible states is left intentionally vague in Lewis' account, the idea that specifying an appropriate context is crucial in the evaluation of counterfactuals has received widespread attention in the literature. 
Most notably, theories of \cpcounts based on selection-function semantics were developed in  ~\cite{priest2008introduction,weiss2017semantics}, while in~\cite{von1963logic,von1972logic} a \emph{cp}-preferential operator is introduced, an approach later formalised in~\cite{van2009everything}\footnote{ 
To mention another approach, \emph{premise semantics} also allows to formalise \emph{cp}-reasoning, by establishing causal links between sentences.
Refer, e.g., to 
\cite{goodman1947problem,kratzer1981partition,kaufmann2013causal}. 
}.

Our contribution 
stems from 
the works of P. Girard and Triplett, who defined in \cite{DBLP:journals/corr/GirardT16,girard2018prioritised} an analysis of \emph{cp}-counterfactuals within  centered preferential models. 
They introduce a \emph{cp}-counterfactual $[A,\Gamma]B$,  
to be understood as Lewis' counterfactual operator $A \discond{} B$ where the truth values of formulas in $\Gamma$  ought to be kept unchanged when evaluating the counterfactual. 
The formula $[A,\Gamma]B$ can be evaluated at a world $x$ by `updating' the preorder relation of preferential models in three different ways. 
In their first proposal, which we call \emph{strict evaluation}, only worlds which agree with $x$ on all formulas in $\Gamma$ are considered for the evaluation. This operation  amounts to check satisfiability of $(A \wedge \bigwedge \Gamma)  \discond{} B$ at $x$. 
However, several worlds might fail to satisfy $A \wedge \bigwedge \Gamma$, thus trivializing the result.  
This can be remedied by relaxing the strict evaluation 
into 
two `prioritisations'~\cite{DBLP:journals/corr/GirardT16,girard2018prioritised}: \emph{na\"ive counting}
and \emph{maximal supersets},
which ultimately rely on `counting' the number of formulas in $\Gamma$  satisfied by the worlds. However, it is easy to find examples of \cpcounts whose evaluations under the aforementioned relaxed prioritisations yield counterintuitive results (\cref{remark:critique_to_girard}). 


In this paper, we introduce the  \emph{disagreement update}, which takes into account a notion of  `significance'
of formulas in $\Gamma$ to re-arrange worlds in a system of spheres. Intuitively, worlds which `agree the most' with the actual world over the most significant formulas in $\Gamma$ will be moved `closest' to the actual world, and will thus make a difference for the evaluation. 
Formally, we shall first associate to each formula 
a \emph{weight}, 
measuring its implausiblity w.r.t. the actual state. This notion, generalised to sets of formulas, allows us to identify the significant formulas for the evaluation of a \cpcount.  
The system of spheres is updated by taking the weight  of specific sets of formulas into account. The counterfactual is then evaluated in the updated model. This fine-grained update, which does not solely rely on counting formulas,  allows us to overcome the problems encountered when evaluating a \emph{cp}-counterfactual with the prioritarisations from~\cite{DBLP:journals/corr/GirardT16,girard2018prioritised}.

Our update gives rise to two logics, $\VWu{d}$ and $\VCu{d}$, which are \emph{ceteris paribus} versions of Lewis' logics $\VW$ and $\VC$, respectively characterised by weakly centered and centered sphere models~\cite{lewis1973}\footnote{Our account is based on sphere models, as these allow for an intuitive definition of weights. However, our definitions can easily be adapted to preferential structures.}. We  
show that validity in $\VWu{d}$ and $\VCu{d}$ coincides with validity in Lewis' logics $\VW $ and $\VC$.  

The paper is structured as follows. In \Cref{Preliminaries} we provide the basic notions, and in  
\cref{sec:weight} we introduce  weight and significance of literals. \Cref{sec:update} presents the disagreement update and compares it with the prioritisations from \cite{DBLP:journals/corr/GirardT16,girard2018prioritised}. 
Then, \Cref{sec:completeness} proves soundness and completeness of our logics, while
\Cref{sec:conclusions} concludes with some directions for future work.

\section{Preliminaries}
\label{Preliminaries}
Given a countable list of propositional atoms $\atm = \{p_0, p_1, p_2, \dots\}$, the formulas of the language $\Lcp$ are generated
as follows:
$
 A ::= p \mid 
 \bot 
 \mid A \rightarrow A
 \mid A\discond{\Gamma}A
 $. 
We set $\neg A := A \rightarrow \bot$, and we can standardly define the other propositional connectives. 
For simplicity, we often write $\bar p$ instead of $\neg p$. We call \emph{literals} the atomic formulas and their negations. We set $\bar {\bar p} = p$. 
The operator $\discond{\Gamma} $ is the \emph{cp}-counterfactual conditional, with $\Gamma$, the \emph{cp}-set, being a finite set of literals of $\Lcp$
\footnote{We could have easily allowed $\bot$, $\rightarrow$-formulas and $\discond{\Delta}$-formulas with $\Delta = \{\varnothing\}$ in \emph{cp}-sets. We chose not to do it, to simplify the definitions in \cref{sec:weight,sec:update}.}. We take $\Gamma$ to be \emph{paired}, that is:  for any atom $p$, it holds that $p \in \Gamma$ iff $\bar p \in \Gamma$. 
A formula  $A\discond{\Gamma} B $ reads ``If $A$ were the case, and all things in $\Gamma$ were unchanged, then $B$ would have been the case''.
When $\Gamma$ is empty, 
$A\discond{\Gamma} B$ corresponds to Lewis' 
counterfactual conditional, and we write it as $A\discond{} B$. 
We denote by $\Ll$ the language of Lewis' logics, obtained by setting   $\Gamma = \{\varnothing\}$ in $\Lcp$. 


By requiring $\emph{cp}$-sets to be paired, we guarantee that both a literal and its negation are taken into account for the evaluation, following the intuition that, if $p$ should be `kept unchanged', the same should hold for $\bar p$. 
Moreover, since the worlds are classical, we have that a world always satisfies some formula in a given \emph{cp}-set.  
We now introduce sphere models, from~\cite{lewis1973}.

\begin{definition}
\label{def:sphere model}
A \emph{sphere model} $\model = \langle W, S, v\rangle$ is composed of a non-empty set of worlds $W$, a valuation function $v: \atm \longrightarrow \mathcal{P}(W)$, and a sphere function $S: W\longrightarrow \mathcal{P}(\mathcal{P}(W))$ which associates to each world $x \in W$ a \emph{system of spheres} $\sph{x}$. We denote by $\alpha, \beta, \dots$ the elements of $\sph{x}$, called \emph{spheres}.  $S$ satisfies the properties of \emph{non-emptiness}: For any $x \in W$, $\{\varnothing\} \notin \sph{x}$; 
and \emph{nesting}: For any $x \in W$, $\alpha, \beta \in \sph{x}$, either $\alpha \subseteq \beta$ or $\beta \subseteq \alpha$. 
Then, a \emph{weakly centered sphere model} is a sphere model satisfying \emph{weak centering}: for any $x \in W$, for any $\alpha \in \sph{x}$, $x \in \alpha$. 
A \emph{centered sphere model} is a sphere model satisfying \emph{centering}, that is, for any $x \in W$, $\{x\} \in \sph{x}$\footnote{Due to nesting,  at any centered sphere model it holds that for any $\alpha \in \sph{x}$, $\{x\} \subseteq \alpha$. 
}. 
\end{definition}

\begin{definition}
\label{def:sat_atomic}
    The satisfaction relation of literals at a world $x$ of $\model$ is defined as 
     $\model, x \modelsm{\mathsf{d}} p $ \emph{iff} $x\in v(p)$ and 
     $\model, x \modelsm{\mathsf{d}} \bar p $ \emph{iff} $x\notin v(p)$.
\end{definition}

In this paper, we restrict our analysis to finite sphere models. 
We shall define the satisfaction of full formulas in (weakly) centered sphere models in \cref{sec:update}, after introducing the elements needed to evaluate \emph{cp}-counterfactuals.

\section{
From weight to significance
}
\label{sec:weight}
To evaluate $cp$-counterfactuals, we shall refine the 
prioritisations introduced in \cite{DBLP:journals/corr/GirardT16,girard2018prioritised} by differentiating literals w.r.t. 
their significance. 
To illustrate what we mean by `significance', let $h$ be a formula stating that a tossed coin lands on the heads side, and $f$ be a formula expressing some law of physics. While satisfying or failing to satisfy $h$ represents a small change in the state of the world, a world where $c$ does not hold would be very different from a world where $c$ is true.  Consequently, when considering things that `ought to remain unchanged' $c$ is a much more 
relevant, or significant, formula than $h$.
%
%
%
We shall formalize this intuition by first associating to each formula a  \emph{weight}, measuring how (im)plausible a formula is in a system of spheres. 
We will then generalise this notion to sets of formulas, and illustrate the role of significant formulas on evaluating \cpcounts. 


For this section, let us fix a  (weakly) centered model $\model = \langle W, S, v\rangle$, some $x \in W$ and a system of spheres $\sph{x}$.  
Thanks to nesting and (weak) centering, $\sph{x}$ can be ordered w.r.t set inclusion:  $\sph{x} = \{\alpha_0, \dots, \alpha_n \}$ with $\alpha_i \subseteq \alpha_{i+1}$, for each $i < n$ and $x\in \alpha_0$. 
For $S$ set,  let $\card{S}$ denote the number of elements of $S$.

The weight of a formula $A$ 
relative to $\model$ and $x$ is defined 
by taking into account the number of worlds  satisfying $A$ and the position of such worlds in $\sph{x}$. The weight is meant to indicate 
how implausible a formula is w.r.t. the actual world: 
namely, a formula $B$ satisfied by several worlds in a sphere is considered `highly plausible', and will have a low weight, while a formula $C$ satisfied by fewer worlds in the same sphere is considered `less plausible' than $B$  and will have  higher weight. 
Moreover, the definition takes into account 
the position of the worlds within the spheres, 
 with formulas satisfied by worlds in inner spheres being `more plausible' than those satisfied by worlds in outer spheres. 
Thus, the weight of a formula is defined by counting the worlds in each sphere satisfying the formula, starting from the innermost sphere.  
Formally:
\begin{definition}
\label{weight_of_formulas}
For any literal $A \in \Lcp$, the \emph{weight of $A$ relative to $\model $ and $x$}, denoted by $ \weight{x}{A} $,   
is the list of natural numbers 
$(w^A_0, .., w^A_n)$,  where 
$w^A_0 = \, \card{\{u \in \alpha_0 \mid u \modelsm{} A \}}$ and for $1 \leq i \leq n$:
\begin{itemize}[noitemsep]
    \item If $\alpha_i = \alpha_{i-1}$, then $w^A_i = w^A_{i-1}$;
    \item If $\alpha_i \supset \alpha_{i-1}$, then $w^A_i \, = \, \card{\{u \in \alpha_i \setdiff \alpha_{i-1} \mid u
 \models A \} } $. 
\end{itemize}
\end{definition}


We next introduce a relation  $\xelleft x$ to compare literals according to their weights.  Literals smaller w.r.t. $\xelleft x$ have lower weight, and are thus the `more plausible' ones. The definition matches Lewis's intuition, according to which
worlds in  innermore spheres are `smaller' w.r.t. the similarity preorder than 
worlds in  outermore spheres.

\begin{definition} 
\label{def:weights:compare}
Let $\weight{x}{A}=(w^A_0, .., w^A_n)$ and  $\weight{x}{B}=(w^B_0, .., w^B_n)$, for some $A, B $ literals and $n \geq 0$. 
We compare $\weight{x}{A}$ and $\weight{x}{B}$  {inverse lexicographically}, that is:  for the first $l\leq n$ such that $w^A_l \neq w^B_l$, set $\weight{x}{B} \presil \weight{x}{A} $ iff $w^A_l< w^B_l$. 
We say that \emph{$B$ has less or equal weight than $A$},  
in symbols $B\xelleft x A $,
iff $\weight{x}{B} \preil \weight{x}{A} $. 
We write $B \eq x A$ iff $B \xelleft x A$ and $ A \xelleft x B $. 
\end{definition}

In finite models, 
the weight of a literal will be one of $2^{\card{W}}$ possible distinct values (lists). 
Literals satisfied by all worlds in $\bigcup \sph{x}$  have the highest weight (and are the most implausible ones), while literals falsified by all worlds have the lowest weight (and are the most plausible). It is easy to see that the relation $\xelleft{x}$ induces a well ordering among literals. 
Moreover, in centered models, $\weight{x}{A}\neq \weight{x}{\bar A}$, for any literal $A$. 
The following shows that the weights of a literal and its negation behave `dually'. The proof is in the Appendix. 

\begin{proposition}
\label{prop:Symmetry}
For  $A,B $ literals, if $B \xelleft x A$  then $\bar A \xelleft x\bar B$.
\end{proposition}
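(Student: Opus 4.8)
The plan is to exploit the fact that, at each sphere, the worlds satisfying $A$ and those satisfying $\bar A$ partition the relevant set of worlds, so that the two weight lists $\weight{x}{A}$ and $\weight{x}{\bar A}$ are \emph{complementary} entry by entry. Concretely, for each index $i$ let $N_i$ be the number of worlds in the $i$-th shell: $N_0 = \card{\alpha_0}$, and $N_i = \card{\alpha_i \setminus \alpha_{i-1}}$ when $\alpha_i \supset \alpha_{i-1}$ (with $N_i = N_{i-1}$ in the degenerate case $\alpha_i = \alpha_{i-1}$). Since every world is classical and hence satisfies exactly one of $A$, $\bar A$, a direct count gives $w^A_i + w^{\bar A}_i = N_i$ for every formula $A$ and every $i$; crucially, $N_i$ depends only on $\model$ and $x$, not on the formula, so $w^{\bar A}_i = N_i - w^A_i$. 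In the degenerate case this identity propagates from $i-1$ to $i$ by a one-line induction, matching the clause $w^A_i = w^A_{i-1}$ of \cref{weight of formulas}.

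Granting this, I would assume $B \xelleft x A$, i.e. $\weight{x}{B} \preil \weight{x}{A}$, and split into two cases. If the weights are equal, $w^A_i = w^B_i$ for all $i$, whence $w^{\bar A}_i = N_i - w^A_i = N_i - w^B_i = w^{\bar B}_i$ for all $i$, so $\bar A$ and $\bar B$ have equal weight and in particular $\bar A \xelleft x \bar B$. Otherwise $\weight{x}{B} \presil \weight{x}{A}$, so by \cref{def:weights:compare} there is a first index $l$ with $w^A_l \neq w^B_l$, and $w^A_l < w^B_l$. For every $i < l$ we have $w^A_i = w^B_i$, hence $w^{\bar A}_i = w^{\bar B}_i$; and at $l$, subtracting both sides from $N_l$ reverses the strict inequality, giving $w^{\bar B}_l < w^{\bar A}_l$. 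Thus $l$ is again the first index at which $\weight{x}{\bar A}$ and $\weight{x}{\bar B}$ differ, and applying the inverse-lexicographic definition to the pair $\bar A, \bar B$ yields $\weight{x}{\bar A} \presil \weight{x}{\bar B}$, so $\bar A \xellefts x \bar B$ and a fortiori $\bar A \xelleft x \bar B$.

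The only real subtlety is bookkeeping the direction of the inverse-lexicographic order: one must check that subtracting the common value $N_l$ at the first point of disagreement turns ``$B$ has the larger count, hence smaller weight'' into ``$\bar A$ has the larger count, hence smaller weight'', which is exactly the reversal the statement predicts, and that the first differing index is preserved under duality. I do not expect a genuine obstacle here, since the classicality of worlds does all the work through the complementarity identity $w^{\bar A}_i = N_i - w^A_i$; the proof is essentially this case analysis together with the verification that passing to duals reflects the order.
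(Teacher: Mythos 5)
Your proof is correct and takes essentially the same route as the paper's: the paper likewise derives from classicality of worlds that $w^{\bar A}_l < w^{\bar B}_l$ iff $w^{B}_l < w^{A}_l$ at every index (your complementarity identity $w^{\bar A}_i = N_i - w^A_i$ just makes that step's shell-count bookkeeping explicit), and then performs exactly your case split between $B \eq{x} A$ and the strict case at the first differing index. Your additional care with the degenerate case $\alpha_i = \alpha_{i-1}$ and with checking that duality preserves the first point of disagreement is sound detail, not a different argument.
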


\begin{figure}[t]
	\begin{center}
		\begin{minipage}{0.26\textwidth}
		\hspace{-1.3cm}
        \small
			\begin{tabular}{ll}
				$p$& \text{``Nixon presses the button.''}  \\
				$e_1$
				& \text{``Technical error 1 occurs.''}\\
				$e_2$&  \text{``Technical error 2 occurs.''}\\
				$l$& \text{``The launch  is successful.''}\\
				$h$ & \text{``There is a nuclear} \\
                & ~~\text{holocaust.''}\\
			\end{tabular}
            \normalsize
		\end{minipage}
		\begin{minipage}{0.22\textwidth}
			\begin{tikzpicture}[scale=0.8]
				\draw[color=gray] (0, 0) ellipse (0.7cm and 0.7cm);
				\draw[color=gray] (0, 0.5) ellipse (0.9cm and 1.2cm);
				\draw[color=gray] (0, 1) ellipse (1.1cm and 1.7cm);
				\draw[color=gray] (0, 1.5) ellipse (1.3cm and 2.2cm);
				\draw[color=gray] (0, 2) ellipse (1.5cm and 2.7cm);
				
				\node[] at (0.1, -1.2) (x) {$\sph x$};
				\node[] at (0, 0) (x) {\tiny$\bullet$ \, \small $x$};
				\node[] at (0, 1.3) (vo) {\tiny$\bullet$ \, \small $z_1$};
				\node[] at (0, 0.9) (lvo) {\small$p, e_1, e_2$};
				\node[] at (0, 2.3) (vd) {\tiny$\bullet$ \, \small $z_2$};
				\node[] at (0, 1.9) (lvd) {\small$p, e_1, e_2$};
				\node[] at (0, 3.3) (yo) {\tiny$\bullet$ \, \small $y_1$};
				\node[] at (0, 2.9) (yol) {\small$p, l,h$};
				\node[] at (0, 4.3) (yd) {\tiny$\bullet$ \, \small $y_2$};
				\node[] at (0, 3.9) (ydl) {\small$p, l,h$};
			\end{tikzpicture}
		\end{minipage}
        \hspace{-0.2cm}\vline\hspace{0.1cm}
		\begin{minipage}{0.2\textwidth}
			\begin{tikzpicture}[scale=0.8]
				\draw[color=gray] (0, 0) ellipse (0.7cm and 0.7cm);
				\draw[color=gray] (0, 0.5) ellipse (0.9cm and 1.2cm);
				\draw[color=gray] (0, 1) ellipse (1.1cm and 1.7cm);
				\draw[color=gray] (0, 1.5) ellipse (1.3cm and 2.2cm);
				\draw[color=gray] (0, 2) ellipse (1.5cm and 2.7cm);
				
				\node[] at (0.1, -1.2) (x) {$\ups{\Gamma}{x}{\mathsf{d}}(x)$};
				\node[] at (0, 0) (x) {\tiny$\bullet$ \, \small $x$};
				\node[] at (0, 1.3) (vo) {\tiny$\bullet$ \, \small $y_1$};
				\node[] at (0, 0.9) (lvo) {\small$p, l,h$};
				\node[] at (0, 2.3) (vd) {\tiny$\bullet$ \, \small $y_2$};
				\node[] at (0, 1.9) (lvd) {\small$p, l,h$};
				\node[] at (0, 3.3) (yo) {\tiny$\bullet$ \, \small $z_1$};
				\node[] at (0, 2.9) (yol) {\small$p, e_1, e_2$};
				\node[] at (0, 4.3) (yd) {\tiny$\bullet$ \, \small $z_2$};
				\node[] at (0, 3.9) (ydl) {\small$p, e_1, e_2$};
			\end{tikzpicture}
		\end{minipage}
        \begin{minipage}{0.12\textwidth}
			\begin{tikzpicture}[scale=0.8]
				\draw[color=gray] (0, 0) ellipse (0.7cm and 0.7cm);
				\draw[color=gray] (0, 0.5) ellipse (0.9cm and 1.2cm);
				\draw[color=gray] (0, 1) ellipse (1.1cm and 1.7cm);
				\draw[color=gray] (0, 1.5) ellipse (1.3cm and 2.2cm);
				\draw[color=gray] (0, 2) ellipse (1.5cm and 2.7cm);
				
				\node[] at (0.1, -1.2) (x) {$\ups{\Sigma}{x}{\mathsf{d}}(x)$};
				\node[] at (0, 0) (x) {\tiny$\bullet$ \, \small $x$};
				\node[] at (0, 1.3) (vo) {\tiny$\bullet$ \, \small $z_1$};
				\node[] at (0, 0.9) (lvo) {\small$p, e_1,e_2$};
				\node[] at (0, 2.3) (vd) {\tiny$\bullet$ \, \small $z_2$};
				\node[] at (0, 1.9) (lvd) {\small$p, e_1,e_2$};
				\node[] at (0, 3.3) (yo) {\tiny$\bullet$ \, \small $y_1$};
				\node[] at (0, 2.9) (yol) {\small$p, l,h$};
				\node[] at (0, 4.3) (yd) {\tiny$\bullet$ \, \small $y_2$};
				\node[] at (0, 3.9) (ydl) {\small$p, l,h$};
			\end{tikzpicture}
		\end{minipage}
	\end{center}
\caption{
\textbf{Left}: Meaning of atomic formulas
and system of spheres $\sph x$ within 
	$\model = \langle W, S, v\rangle$, for $W = \{x, z_1, z_2, y_1, y_2\}$, 
	$\sph{z_i}=\{\{z_i\}\}$ and $\sph{y_i}=\{\{y_i\}\}$ for $i \in\{1,2\}$, and 
	$
	\sph{x} = \{\{x\},\{x,z_1\},\{x,z_1,z_2\},\{x,z_1,z_2,y_1\},\{x,z_1,z_2,y_1,y_2\}\}
	$. 
	Moreover,
	$v(p) = \{z_1, z_2, y_1, y_2\}$; $v(e_1) = v(e_2) = \{z_1, z_2\}$; and $v(l)  = v(h) = \{y_1, y_2\}$.  
	\textbf{Right}: Updated systems of spheres 
	$\ups{\Gamma}{x}{\mathsf{d}}(x)$ and  $\ups{\Sigma}{x}{\mathsf{d}}(x)$.   Refer to Def.~\ref{def:update} and \cref{ex:running 2}.  
}
\label{fig:running 1}
\end{figure}

\begin{example}\label{ex:running 1}
    Consider the centered  model $\model$ to the left of \cref{fig:running 1}, representing a variation of Nixon's counterexample. Propositions $e_1$ and $e_2$ represent two minor technical errors, which prevent the missile from launching. Since these are small changes w.r.t.\ the actual world when compared to a nuclear holocaust, worlds satisfying $e_1$ and $e_2$  are closer to $x$ than worlds satisfying $h$ in $\sph x$. 

    The two errors, even when considered together, represent a less significant change than the  miracle from Fine's example (\nc). 
    Considering \emph{two} very small changes 
    is a problem for approaches that evaluate \cpcounts by `counting' the number of formulas which worlds (fail to) satisfy. As we will argue in \cref{ex:running 2} and \cref{remark:critique_to_girard}, our approach can handle such cases. 

    The weights of literals are calculated as follows:  
    $\weight{x}{p} = (0,1,1,1,1)$; 
    $\weight{x}{\bar p} =(1,0,0,0,0)$; 
    $\weight{x}{e_1} = \weight{x}{e_2} =    (0,1,1,0,0)$;
    $\weight{x}{l} = \weight{x}{h} = (0,0,0,1,1)$;
    $\weight{x}{\bar l} =(1,1,1,0,0)=\weight{x}{\bar h}$. 
    Comparing the weights according to Definition~\ref{def:weights:compare} yields:
	$	\bar h  \eq x \bar l \xellefts x \bar e_1 \eq x \bar e_2 \xellefts x \bar p \xellefts x p \xellefts x e_1 \eq x e_2 \xellefts x h \eq x l.   $
\end{example}


Next, we define weights of sets of literals. This amounts to calculating the weights of all the literals in the set, and ordering them from higher to lower  (left to right). The resulting lists are compared lexicographically. Formally: 
\begin{definition}
\label{weight of sets}
For a set of literals $\Gamma = \{ A_1, .., A_m \}$, the \emph{weight of $\Gamma$ relative to $\model$ and $x$}, written $\weight{x}{\Gamma}$, is the list of weights of literals $(w_1, .., w_m)$ where, for $i< m$ and $j, k \leq m$, we set $w_i = \weight{x}{A_j}$ and  $w_{i+1} = \weight{x}{A_k}$ if and only if  $ A_k \xelleft x A_j$.
We set  $\weight{x}{\{\varnothing\}}=(0)$.
\end{definition}

\begin{definition}
\label{def:weight:sets:compare}
For $\Gamma$, $\Delta$ sets of literals, let $
\weight{x}{\Gamma}=\{c_1,..,c_m\}$ and $              \weight{x}{\Delta}=\{d_1,.. , d_k\}$, for some $m, k \geq 0$. 
We compare  $\weight{x}{\Gamma}$ and $\weight{x}{\Delta}$ lexicographically, that is: 
for the first $l \leq \min\{m,k\}$ such that $c_l\neq d_l$, set $\weight{x}{\Gamma} \presl  \weight{x}{\Delta} $ if and only if $   c_l \xelleft x d_l $. If, for all $l\leq min\{m,k\}$ it holds that $d_l = c_l$, we set $\weight{x}{\Gamma} \prel  \weight{x}{\Delta} $ if and only if $m\leq k$. 
We say that \emph{$\Gamma$  has less or equal weight than $\Delta$
relative to $\model, x$}, in symbols $\Gamma \lexleft x \Delta$, iff $\weight{x}{\Gamma} \prel  \weight{x}{\Delta}$. 
We write 
$\Gamma \eq x \Delta $ whenever $\Gamma \lexleft x \Delta$ and $ \Delta \lexleft x \Gamma $.  
\end{definition}


The next result, whose proof is in the Appendix, shows that the subset relation is monotone over weights. 

\begin{proposition}
\label{monotonicity}
For $\Gamma, \Delta$ sets of literals with $\Gamma \subseteq \Delta$, it holds that $\Gamma \lexleft x \Delta$. 
\end{proposition}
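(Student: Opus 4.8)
The plan is to reduce the set-comparison of \cref{def:weight:sets:compare} to a \emph{pointwise} comparison of the two sorted weight-lists, and then simply read off the conclusion from the two clauses of that definition. First I would dispose of the boundary case $\Gamma = \varnothing$, where the claim is immediate: by the convention $\weight{x}{\varnothing}=(0)$ and the fact that the all-zero weight is $\xelleft{x}$-minimal among formula weights, we get $\varnothing \lexleft x \Delta$ for every $\Delta$. So from now on assume $\Gamma \neq \varnothing$.

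Next I would fix notation. Write $\weight{x}{\Gamma} = (c_1,\dots,c_m)$ and $\weight{x}{\Delta} = (d_1,\dots,d_k)$. By \cref{weight of sets} both lists consist of the formula-weights of the respective sets arranged in non-increasing order w.r.t.\ $\xelleft{x}$ (so $c_{i+1}\xelleft{x} c_i$ and $d_{j+1}\xelleft{x} d_j$), and they have lengths $m=\card{\Gamma}$ and $k=\card{\Delta}$. Since $\Gamma\subseteq\Delta$ we immediately obtain $m\le k$; and, because the weight of a formula depends only on the formula (together with $\model,x$), the multiset $\{c_1,\dots,c_m\}$ of weights occurring in $\weight{x}{\Gamma}$ is a sub-multiset of $\{d_1,\dots,d_k\}$ (counting with multiplicity, as distinct formulas may share a weight).

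The crux is the pointwise domination: for every $l\le m$ it holds that $c_l \xelleft{x} d_l$. To establish it I would use that $\xelleft{x}$ is a \emph{total} preorder on formula weights, since inverse-lexicographic comparison of equal-length lists is total (cf.\ \cref{def:weights:compare} and \cref{rem:significance defin}). Fix $l\le m$. The $l$ entries $c_1,\dots,c_l$ all satisfy $c_l \xelleft{x} c_i$ and, by the sub-multiset property, all occur among the $d_j$; hence $\weight{x}{\Delta}$ contains at least $l$ entries $d_j$ with $c_l\xelleft{x} d_j$. As $\weight{x}{\Delta}$ is sorted non-increasingly, its $l$-th entry $d_l$ is therefore one of these, giving $c_l \xelleft{x} d_l$. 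This is the step I expect to demand the most care: it is the standard fact that deleting elements from a descending list dominates it position by position, and the only delicacy is bookkeeping ties, which the totality of $\xelleft{x}$ and counting with multiplicity handle cleanly.

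Finally I would conclude through the two clauses of \cref{def:weight:sets:compare}. If $c_l = d_l$ for all $l\le m=\min\{m,k\}$, the second clause applies and $\Gamma\lexleft{x}\Delta$ holds because $m\le k$. Otherwise, let $l$ be the first index with $c_l\neq d_l$; the pointwise domination gives $c_l\xelleft{x} d_l$, and since the weights differ this is strict, $c_l\xellefts{x} d_l$, so the first clause yields $\weight{x}{\Gamma} \presl \weight{x}{\Delta}$, i.e.\ $\Gamma\lexlefts{x}\Delta$. In either case $\Gamma\lexleft{x}\Delta$, as required.
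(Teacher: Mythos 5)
Your proof is correct, and it takes a somewhat different (and in fact more careful) route than the paper's. The paper argues by a case split on the new formulas: either some $A \in \Delta \setminus \Gamma$ satisfies $B \xellefts{x} A$ for some $B \in \Gamma$, in which case it asserts $\Gamma \lexlefts{x} \Delta$ ``by definition'', or else every $A \in \Delta\setminus\Gamma$ satisfies $A \xelleft{x} B$ for all $B \in \Gamma$, in which case the two sorted lists share their first $n$ entries and the length clause of \cref{def:weight:sets:compare} decides (the two cases being exhaustive by totality of the weight order). You instead prove the pointwise-domination fact --- $c_l \xelleft{x} d_l$ for every $l \le m$, via the sub-multiset property and the non-increasing sort --- and then read the conclusion off uniformly from the two clauses of \cref{def:weight:sets:compare}. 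This is a genuine improvement in rigor at exactly the point where the paper is glib: its first case is not literally ``by definition'', since one must still locate a first differing position $l \le n$ and verify that $d_l$ there strictly outweighs $c_l$, which is precisely what your domination lemma delivers (with the tie/multiplicity bookkeeping handled by counting weights as a multiset). You also dispose of the boundary case $\Gamma = \varnothing$ explicitly via the convention $\weight{x}{\varnothing} = (0)$, which the paper's proof silently sets aside by restricting to $\Gamma \subset \Delta$ with $\Gamma$ implicitly nonempty. What the paper's version buys is brevity; what yours buys is a complete argument whose key step (deleting elements from a descending list dominates it position by position) is stated and proved rather than assumed.
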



When evaluating \emph{cp}-counterfactuals, we will calculate the weights of specific sets of literals - namely, the literals in a \emph{cp}-set that are different between the actual world and other possible worlds. \cref{prop:Symmetry} naturally induces a spectrum of weights of literals, from low (hight implausibility) to high (low implausibility). 
If a literal $A$ is placed at the low end of the spectrum, its negation $\bar A$ is dually placed at the high end.  Those pairs (literals and their negations) which find themselves closer to the ends of this spectrum are precisely the ones which will be crucial for the evaluation. To evaluate a \cpcount, we will `re-arrange' the worlds in a system of spheres, according to the (im)plausibility of the literals in the \emph{cp}-set that the worlds (fail to) satisfy. Thus, the `significant' pairs of literals for the evaluation will be those that have very low and very high implausibility, as worlds satisfying or failing to satisfy them will `move' very close to or very far from the actual world in the updated system of spheres.  


Formally,  formula $A$ is \emph{at least as significant} as  formula $B$ iff $\{A, \bar A\} \lexleft x \{B, \bar B\}$.
In \cref{ex:running 1}, we have
$\{e_1, \bar{e_1}\} \lexleft x \{h, \bar{h}\}$, as $\max \{\weight{x}{e_1}, \weight{x}{\bar e}_1\} = e_1 \xellefts x h = \max \{\weight{x}{h}, \weight{x}{\bar h}\} $. Similarly, $\{e_2, \bar{e}_2\} \lexleft x \{h, \bar{h}\}$, whence $h$ is more significant than $e_1$ and $e_2$. 
Also, since $\weight{x}{l} = \weight{x}{h}$,
we conclude that $l$ is more significant than $e_1$ and $ e_2$.

\section{Evaluating \emph{cp}-counterfactuals}
\label{sec:update}
In this section we discuss how to evaluate a \emph{cp}-counterfactual $A \discond{\Gamma} B$ at a world $x$ of a (weakly) centered sphere model $\model$. We shall first `re-arrange', or `update', the system of spheres $\sph x$ by calculating, for each world $y$, the weight of the \emph{disagreement set} between $x$ and $y$, that is, the set of literals in $\Gamma$ which are satisfied at $x$ but not at $y$. 
Intuitively, the worlds whose disagreement sets have low weights   are the `most plausible', as they differ less from the actual world $x$. 
We define the updated system of spheres $\ups{\Gamma}{x}{}(x)$  by accordingly placing such worlds in the innermost spheres. 
The Lewis counterfactual $A \discond{} B$ is then evaluated at $\ups{\Gamma}{x}{}(x)$. We start by defining disagreement sets and the corresponding disagreement update.  


\begin{definition}
\label{def:prioritisations}
For $\model = \langle W, S, v\rangle$,  $x,y \in W$ and $\Gamma $ \emph{cp}-set of formulas, the \emph{disagreement set of $y$ w.r.t. $\Gamma$ and $x$} is  $\disset{x}{y}{\Gamma} = \{G \in \Gamma \mid x \modelsm{} G \textit{ iff } y \not \modelsm{} G \}$. 
\end{definition}

\begin{definition}
\label{def:update}
Given a (weakly) centered model $\mathcal{M}= \langle W, S, v\rangle$, $x \in W$ and a \emph{cp}-set $\Gamma$, the  \emph{disagreement update of $\model$ at $x$ and $\Gamma$}, denoted  $\upmod{\Gamma}{x}{\mathsf{d}}= \langle \upw{\Gamma}{x}{d}, \ups{\Gamma}{x}{\mathsf{d}}, \upv{\Gamma}{x}{\mathsf{d}}\rangle$, is defined  by setting $\upw{\Gamma}{x}{\mathsf{d}} = W$ and $\upv{\Gamma}{x}{\mathsf{d}} = v$. To define $\ups{\Gamma}{x}{\mathsf{d}}$, we first define the spheres of $\upsph{d}{x}$. For $n\in \{\weight{x}{\disset{x}{y}{\Gamma}} \mid {y} \in \bigcup \sph x\}$, let  
$
\sigma_n = \{ {y} \in 
\bigcup \sph{x}
\mid \weight{x}{\disset{x}{y}{\Gamma}} \prel n\}.     
$
Then, for each $n$ and for each $\card{\bigcup \sph{x}} \, > i \geq 0$, we inductively define `subspheres' $\sigma_{n_i}$ of $\sigma_n$: 
\begin{equation*}
\begin{split}
&\sigma_{n_0} \! =   \sigma_{n-1} \cup \{y \in
 (\sigma_n \!\setdiff \sigma_{n-1}) 
 \!
\mid \!\!
\text{ for all }  z  
\in (\sigma_n\! \setdiff \sigma_{n-1}),   \alpha \in \sph{x}, \! \text{ if } z \in \alpha \text{ then } y \in \alpha \}  
\\
&\sigma_{n_{i+1}} \! = \sigma_{n_i} \cup \{y \in
 (\sigma_n \setdiff\sigma_{n_i}) 
\mid \!\!
\text{ for all }  z 
\in (\sigma_n \setdiff \sigma_{n_i}),  \alpha \in \sph{x}, \!
\text{ if } z \in \alpha \text{ then } y \in \alpha \} 
\end{split}
\end{equation*}
Finally, set $\ups{\Gamma}{x}{\mathsf{d}}(x) = \{\sigma_{n_i}\}_{n,i} $ and for all $y \neq x$, $\ups{\Gamma}{x}{\mathsf{d}}(y) = \sph{y}$. 
\end{definition}

From the definition it follows that, for every $i$, $\sigma_{n-1} \subseteq \sigma_{n_i} \subseteq \sigma_n$. Moreover, since $\model$ is finite, $\sigma_{n_j}=\sigma_n$, for some   $j\leq \,\card{\bigcup \sph{x}}  $.   
Intuitively, for any $n$ weight of some disagreement sets of worlds in $\sph x$, 
the sphere $\sigma_n$ contains exactly all worlds whose disagreement set w.r.t $\Gamma$ and $x$ has weight at most $n$ (according to $\lexleft x$ or, equivalently, to $\prel$). 
The subspheres $\sigma_{n_i}$ are then needed to `distinguish' between worlds $y_1$, $y_2$ whose disagreement sets have the same wights, but which are placed within different spheres in $\sph x$. In this case, we want $y_1$ and $y_2$ to be placed in distinct spheres in $\ups{\Gamma}{x}{\mathsf{d}}(x)$, whence the introduction of the subspheres. Next, we show that the models obtained through the updates are sphere models (the proof is in the Appendix). 

\begin{lemma}
\label{update properties}
Given $\model$, $x$ and  $\Gamma$,  
$\upmod{\Gamma}{x}{\mathsf{d}}$ satisfies non-emptiness and nesting. Moreover, if $\model$ is (weakly) centered then $\upmod{\Gamma}{x}{\mathsf{d}}$ is (weakly) centered.
\end{lemma}

We now define satisfaction of formulas. We use the following shorthands: 
$\alpha \fem{\mathsf{d}} A$ \emph{iff} $ \text{there is } y \in \alpha \text{ s.t. } y \modelsm{\mathsf{d}}  A$, 
and $\alpha \fum{\mathsf{d}} A $ \emph{iff} $  \text{for all } y \in \alpha,  y \modelsm{\mathsf{d}} A$.

\begin{definition}
\label{def:satisfaction}
The satisfaction relation of formulas $A \in \Lcp$ at a world $x$ of a model $\model$ is defined by adding to the clauses in Definition~\ref{def:sat_atomic} the following:
    \begin{itemize}[noitemsep]
    \item $\model, x \not\modelsm{\mathsf{d}} \bot $;
    \item $\model, x \modelsm{\mathsf{d}} A \rightarrow B $ \emph{iff} $\model, x \not\modelsm{\mathsf{d}} A $ or $ \model, x \modelsm{\mathsf{d}} B $;
    \item $	\model, x \modelsm{\mathsf{d}} A \discond{} B $ \emph{iff}  there is $  \alpha \in \sph{x}$ such that $ \alpha \fem{\mathsf{d}}A$,  then there is $ \beta \in \sph{x}$ such that $\beta \fem{\mathsf{d}} A $ and $ \beta \fum{\mathsf{d}} A \rightarrow B$; 
    \item  $\model, x \modelsm{\mathsf{d}} A \discond{\Gamma} B$ \emph{iff} $ \upmod{\Gamma}{x}{\mathsf{d}}, x \modelsm{\mathsf{d}} A \discond{} B$.
    \end{itemize}
   We say that $A$ is \emph{valid in weakly centered} (resp. \emph{centered}) \emph{models models under the disagreement update}, in symbols 
   $\validm{}^{\mathsf{VW}^\mathsf{d}} A$
   (resp. $\validm{}^{\mathsf{VC}^{\mathsf d}} A$), iff  $\model, x \modelsm{\mathsf{d}} A$ holds for for all worlds $x$ and all weakly centered (resp. centered) models $\model$. 
\end{definition} 

The clause for $\cond$ above corresponds to the truth condition for Lewis counterfactual. Thus, when restricted to formulas in $\Ll$, validity in  (weakly) centered sphere models under the disagreement update coincides with validity in  (weakly) centered sphere models. 
Formally, for $A \in \Ll$, we say that $A$ is \emph{valid in weakly centered} (resp. \emph{centered}) \emph{sphere models}, denoted by $\validm{}^{\mathsf{VW}} A$ (resp. $\validm{}^{\mathsf{VC}} A$) \emph{iff}  $\model, x \modelsm{\mathsf{d}} A$ holds for for all worlds $x$ and all weakly centered (resp. centered) models $\model$. These sets of validities identify Lewis' logics $\VW$ and $\VC$ respectively. 

The sets of formulas valid under the disagreement update in turn give rise to two logics, which we denote by $\VWu{d}$ and  $\VCu{d}$. 
We have just observed that, for $A \in \Ll$, $\validm{}^{\mathsf{VW}} A$ \emph{iff} $\validm{}^{\mathsf{VW}^\mathsf{d}} A$ and $\validm{}^{\mathsf{VC}} A$ \emph{iff} $\validm{}^{\mathsf{VC}^\mathsf{d}} A$. 
In the next section, we shall prove that similar statements hold for formulas of $ \Lcp$, thus establishing soundness and completeness of $\VWu{d}$ and  $\VCu{d}$ w.r.t. Lewis' logics $\VW$ and $\VC$ respectively.
We conclude this section by illustrating with an example the satisfiability under the disagreement update, and comparing our updates with the prioritarisations from the literature. 

\begin{example}\label{ex:running 2}
Recall the model $\model$ from 
\cref{ex:running 1}. 
Since  worlds satisfying $e_1$ and $e_2$  are closer to $x$ than worlds satisfying $h$, the counterfactual from Nixon's counterxample $p\discond {}h$ is evaluated as false at $\sph x$, that is: $\model, x \not \models p\discond {}h$.  
Let us now consider  the \emph{cp}-sets $\Gamma = \{e_1,e_2,\bar e_1,\bar e_2\}$, $\Sigma=\Gamma \cup \{l,\bar l\}$ and the \cpcounts $p \discond\Gamma h$ and $p \discond{\Sigma} h$. 
They respectively express ``If Nixon had pressed the  button, and no technical errors occurred, there would have been a nuclear holocaust'' and ``If Nixon had pressed the button,  no technical errors occurred, but the launch was not successful, there would have been a nuclear holocaust.''
Intuitively, the first sentence should be evaluated as true, and the second as false. We  evaluate  the  \cpcounts in our framework.

 Continuing from \cref{ex:running 1}, we need to consider the disagreement sets of each world w.r.t. $x$, and compare their weights (Definition~\ref{def:weight:sets:compare}), considering $\Gamma $ and $\Sigma$ respectively. The disagreements sets are: 
 $\disset{x}{x}{\Gamma}  = \disset{x}{y_1}{\Gamma}=\disset{x}{y_2}{\Gamma} =  \varnothing $; 
 $\disset{x}{z_1}{\Gamma} =\disset{x}{z_2}{\Gamma}  =   \Gamma$
 and 
$\disset{x}{y_1}{\Sigma} =\disset{x}{y_2}{\Sigma}  =  \{l,\bar l\}$; 
and $\disset{x}{z_1}{\Sigma} =\disset{x}{z_2}{\Sigma}  = \Gamma$. 
Comparison of their weights yields:
$ \varnothing \eq{x}\disset{x}{x}{\Gamma} \eq{x}\disset{x}{y_1}{\Gamma} \eq{x} \disset{x}{y_2}{\Gamma} \lexlefts x \disset{x}{z_1}{\Gamma} \eq{x}\disset{x}{z_2}{\Gamma} \eq{x}\Gamma $
 and 
$\Gamma \eq{x} \disset{x}{z_1}{\Sigma} \eq{x} \disset{x}{z_2}{\Sigma}  \lexlefts x \disset{x}{y_1}{\Sigma} \eq{x}\disset{x}{y_2}{\Sigma} \eq{x}\{l,\bar l\}   $.
The two updated systems of spheres $\ups{\Gamma}{x}{\mathsf{d}}(x)$ and $\ups{\Sigma}{x}{\mathsf{d}}(x)$ are displayed on the right of \cref{fig:running 1}. 
It holds that $\upmod{\Gamma}{x}{\mathsf d}, x \modelsm{\mathsf d} p \discond{} h$, and $\upmod{\Sigma}{x}{\mathsf d}, x \not\modelsm{\mathsf d} p \discond{} h$. 
The disagreement update yields the intuitively correct result on both formulas.

Observe that, on $p \discond{\Sigma} h$, the updated system of spheres $\ups{\Sigma}{x}{\mathsf{d}}(x)$ `ranks' the worlds according to the significance of the literals in $\Sigma$ they satisfy. Thus, worlds $y_1, y_2$ are placed in the outermost spheres, as they `disagree' with $x$ on $l$, the most significant formula in $\Sigma$.  
\end{example}

%
%
%

\begin{figure}[t!]  \vspace{28pt} 
	\centering
	\begin{tabular}{| @{\hspace{0.2cm}} c 
    @{\hspace{0.2cm}} || @{\hspace{0.2cm}}  c @{\hspace{0.2cm}} | @{\hspace{0.2cm}}  c @{\hspace{0.2cm}} | @{\hspace{0.2cm}}  c @{\hspace{0.2cm}} || @{\hspace{0.2cm}} c @{\hspace{0.2cm}} | }
    \hline
    \emph{cp}- 
     & strict & na\"ive & maximal &disagreement\\
     counterfactual&   evaluation & counting & superset &update \\
    \hline

			 & & & & \\[-0.3cm]
		$p\discond{\Gamma} h$  
        & \checkmark&  \checkmark & \checkmark&\checkmark\\
		$p\discond{\Sigma} h$
        & \checkmark&\checkmark & $\times$ &$\times$\\
		$p\discond{\Gamma} \bar h$ 
        &   $\times$  & $\times$ &$\times$ &$\times$\\
		$p\discond{\Sigma} \bar h$
        &\checkmark&$\times$ &$\times$ &\checkmark\\
		\hline
		
	\end{tabular}
	\vspace{0.3cm}
	
\caption{Results of evaluating \cpcounts at world $x$ of the model from \cref{ex:running 1}. Symbol $\checkmark$ means that the formula is satisfiable at $x$, and $\times$ that it is not. 
Strict evaluation, na\"ive counting and maximal superset are from~\cite{DBLP:journals/corr/GirardT16,girard2018prioritised} (strict evaluation is there called \emph{ceteris paribus}), while disagreement update is our proposed evaluation. 
}
\label{fig: analysis of evaluations}
\end{figure}

\begin{remark}[Comparison with \cite{girard2018prioritised,DBLP:journals/corr/GirardT16}]
\label{remark:critique_to_girard}
We now consider  the  prioritisations \emph{na\"ive counting} and \emph{maximal supersets} defined by Girard and Triplett in~\cite{girard2018prioritised,DBLP:journals/corr/GirardT16}. 
These prioritarizations operate on preferential centered models, and detail how to `re-arrange' the preorder relation to evaluate formulas. Both notions rely on  \emph{agreement sets}, defined in \cite{DBLP:journals/corr/GirardT16} as $A^\mathcal{M}_\Gamma(x, y)=\{G\in \Gamma \mid x\modelsm{} G \text{ iff } y\modelsm{} G\}$. 
Na\"ive counting simply counts the number of formulas two worlds $z$, $x$ agree on.
Thus, a world $y$ is `closer' to $x$ than $z$ if  $|A^\mathcal{M}_\Gamma(x,y)|\geq |A^\mathcal{M}_\Gamma(x,z)|$. 
Under maximal supersets, a world $y$ is `closer' to $x$ than $z$ if $A^\mathcal{M}_\Gamma(x,z)\subseteq A^\mathcal{M}_\Gamma(x,y)$. 
Our disagreement sets 
(Definition~\ref{def:prioritisations}) are 
dual to the agreement sets, and the definitions in~\cite{girard2018prioritised,DBLP:journals/corr/GirardT16} can be easily adapted to sphere models. 
Unlike na\"ive counting and maximal supersets, our disagreement update differentiates formulas in \emph{cp}-sets by taking into account their significance or, equivalently, by considering their weights.  
Instead, the prioritarisations of Girard and Triplett are based on `counting' the number of formulas in specific sets. This can lead to counterintuitive evaluations of quite simple formulas. In \cref{fig: analysis of evaluations} we report the results of evaluating formulas \emph{cp}-formulas in the centered model $\model$ from \cref{ex:running 1}. While maximal supersets gives the intuitively correct evaluation of the formulas considered in \cref{ex:running 2}, none of the evaluations from~\cite{DBLP:journals/corr/GirardT16} yields the intuitively correct result over formulas $p\discond{\Gamma}\bar h$ and $p\discond{\Sigma}\bar h$, which our disagreement update evaluates correctly as satisfiable and not satisfiable at $\model, x$. 

Moreover, the \cpcount under the strict evaluation of~\cite{DBLP:journals/corr/GirardT16} is not  \emph{dynamic} (in the sense of Dynamic Epistemic Logic, refer, e.g., to~\cite{van2007dynamic}). The strict evaluation performs a `one-step' update of the model. If the formula contains several \emph{cp}-modalities, these are not be taken into account. 
The authors do not discuss whether naïve counting and maximal supersets allow for iterated updates. 
Our disagreement update allows to evaluate formulas with nested \emph{cp}-modalities.
\end{remark}

\section{Soundness and completeness}
\label{sec:completeness}

\newcommand{\VL}{\mathsf{VL}}
\newcommand{\VLu}[1]{\mathsf{VL}^{\mathsf{#1}}}

We now turn to proving soundness and completeness of $\VWu{d}$ and $\VCu{d}$ with respect to the  Lewis' logics $\VW$ and $\VC$ respectively. For this section, set $\mathsf{L}\in\{\mathsf{W}, \mathsf{C}\}$. We shall write $\VL$ (resp. $\VLu d$) to denote $\VW$ or $\VC$ (resp. $\VWu{d}$ and $\VCu{d}$). 

Let us first illustrate our proof strategy. 
For convenience, we shall consider a language $\Lcpp$ having as primitive a different \emph{cp}-operator: $\disless{\Gamma}$, which is the \emph{ceteris paribus} version of the \emph{comparative plausibility} operator $\less$,  introduced by Lewis in~\cite{lewis1973}. 
We denote by $\Llp$ the language featuring only the $\less$ operator (without \emph{cp}-sets).  
The comparative plausibility and the conditional operator are interdefinable, and it is easy to show that the same holds for their \emph{cp}-versions. 
Then, we shall prove that  for any $A \in \Lcpp $,  $ \validm{}^{\VL} A$ iff  $ \validm{}^{\VL^\mathsf{d}} \hat A$, where $\hat A\in \Llp$ is a formula equisatisfiable with $A$ but having empty \emph{cp}-sets. 
Constructing formula $\hat A$ is far from trivial, but it allows us to relate the \cpcount evaluated in the disagreement update with Lewis' counterfactuals. 
The  construction is based on ideas from~\cite{girard2018prioritised}, adapted to the disagreement update. 
Soundness and completeness of $\VLu d$ w.r.t. $\VL$ then follow immediately. 
We choose the (\emph{cp}-)comparative plausibility as our primitive operators because their truth conditions are simpler than the ones for the (\emph{cp}-)counterfactual, and thus simplify the construction. 

We start by defining the languages $\Lcpp$ and $\Llp$. The formulas of $\Lcpp$ are generated from a countable set of propositional atoms $\atm $, by means of the grammar  
$
A ::= p \mid
\bot 
\mid A \rightarrow A
\mid A\disless{\Gamma}A
$, 
where $p \in  \atm$ and $\Gamma $ \emph{cp}-set of literals of $\Lcpp$ (so $\Gamma$ is finite and paired). As before, we write $A \less B$ for $A \disless{\varnothing} B$. 
We denote by $\Llp$ the language generated by the above grammar with the restriction that $\Gamma = \{\varnothing\}$. 
The operator $A \less B$ is read as ``$A$ is at least as plausible as $B$'', and  $A\disless{\Gamma}B$ is its \emph{cp}-version: `All things in $\Gamma$ being equal, $A$ is at least as plausible as $B$'. 
Moreover, following Lewis, we define  $\Diamond A := \lnot (\bot \less A)$. 

Given a (weakly) centered model $\model$, a world $x$ and a \emph{cp}-set $\Gamma$, the disagreement update of $\model$ is defined as in Definition~\ref{def:prioritisations}. 
Then, satisfiability of a formula $A \disless{\Gamma} B$ at a world of a (weakly) centered model is defined as follows:
\begin{itemize}
	\item 	$\model, x \modelsm{\mathsf d} A \disless{} B $ \emph{iff} for all $ \alpha \in \sph{x}$, if $ \alpha \fem{\mathsf d} B$, then $ \alpha \fem{\mathsf d} A$;
	\item $\model, x \modelsm{\mathsf u} A \disless{\Gamma} B $ \emph{iff} $ \upmod{\Gamma}{x}{\mathsf{d}}, x \modelsm{\mathsf d} A \disless{} B$. 
\end{itemize}
Again, the  clause for $\less$ corresponds to Lewis' satisfiability condition for the comparative plausibility at nested sphere models \cite{lewis1973}. Lewis showed that the comparative plausibility and the counterfactual are interdefinable, and this result easily extends to our framework. We omit the proof, which is routine from~\cite{lewis1973}. 

\begin{lemma} 
	\label{lemma:eq}
For $\model$ (weakly) centered sphere model  and $x$ world, it holds that: 
	\begin{enumerate}[noitemsep]
		\item \label{ap it:lemma:eq:cond}For \!\! $A, B \in  \Lcp$, 
		$\model, x \modelsm{\mathsf{d}} A \discond{\Gamma} B $ iff $ \model, x \modelsm{\mathsf{d}} (\bot \disless{\Gamma} A) \lor \lnot \big((A \land \lnot B) \disless{\Gamma} (A \land B) \big)$\\[-0.2cm]
		\item For $A, B \in \Lcpp$, 
		$\model, x \modelsm{\mathsf{d}} A \disless{\Gamma} B $ iff $ \model, x \modelsm{\mathsf{d}}  \big((A \lor B) \discond{\Gamma} \bot \big) \lor \big( (A \lor B) \discond{\Gamma} \lnot A \big)$  
	\end{enumerate}
\end{lemma}

In light of \cref{lemma:eq}, with an abuse of notation we say that, for $A \in \Lcp$,  $A$ is \emph{valid in weakly centered} (resp. \emph{centered}) \emph{models models under the disagreement update}, in symbols 
$\validm{}^{\mathsf{VW}^\mathsf{d}} A$ (resp. $\validm{}^{\mathsf{VC}^\mathsf{d}} A$), iff  $\model, x \modelsm{\mathsf{d}} A$ holds  for all worlds $x$ and all weakly centered (resp. centered) models $\model$. 
Similarly, for $A \in \Llp$, we write $\validm{}^{\VW} A$ and $\validm{}^{\VC} A$ to denote validity in Lewis' logics $\VW $ and $\VC$ respectively. 
The following immediately holds, since satisfaction of $A \in \Llp$ within $\VL$ and $\VLu{d}$ remains unchanged, as no updates are performed in $\VLu{d}$. 

\begin{fact}
    \label{lemma:from_less_to_lesscp}
    For $A \in \Llp$, it holds that $\validm{}^{\VL}A $ iff $\validm{}^{\VL^\mathsf{d}} A$.
\end{fact}



Next, we investigate the relation of \emph{cp}-comparative plausibility formulas with their correspondents in Lewis' language $\Llp$, 
aiming at translating $\Lcpp$ formulas into equisatisfiable $\Llp$ formulas. 
The construction is quite complex, and it depends on the model chosen to evaluate $\Lcpp$ formulas. 

\begin{definition}
	For any model $\model$, world $x$ and $\Gamma$ \emph{cp}-set of formulas, the \emph{forcing set of $\Gamma, x$} is defined as 
	$\gtruthset{x}{\Gamma} = \{G \in \Gamma \mid x \modelsm{\mathsf{d}} G \}$ and its complement as $\gtruthset{x}{\Gamma}^c=  \{G \in \Gamma \mid x \not\modelsm{\mathsf d} G \}$. Moreover,  let the \emph{set of paired subsets of $\Gamma$}  be $\pr\Gamma = \{\lambda \subseteq\Gamma \mid \lambda \text{ is paired}\}$. 
\end{definition}

For $A \in \Lcpp$, let $\cplless{A}$ denote the number of \emph{cp}-connectives with non-empty \emph{cp}-sets occurring in $F$.  
If $A \in \Lcpp$ and  $\cplless{A} = 0$, then   $A \in \Llp$. 
The proof of the following Lemma can be found in the Appendix.

\begin{lemma}
	\label{lemma:paired:dis}
Take $F = A \disless{\Gamma} B \in \Lcpp$,  with 
$\Gamma \neq \varnothing$. 
For $\model$  (weakly) centered model and world $x$,  $\model, x \modelsm{\mathsf{d}} F$ iff  $\model, x \modelsm{\mathsf{d}} \hat F$, where $\hat F$ is the formula:     
	\begin{eqnarray} 
		\hat F &  =&    \bigwedge_{\lambda \in \pr\Gamma } 
		\bigg[^1    
		\bigg(^2 
		\bigwedge_{\pr \Gamma \ni \lambda' \lexlefts x \lambda} 
		\neg \Diamond 
		\big( 
		A \land  \bigwedge\gtruthset{x}{\Gamma \setminus \lambda'} 
		\big)
		\bigg)^2
		\to     
        \nonumber\\
		& &
		\to \bigg (^3 
		\big (
		\bigvee_{
			\pr\Gamma \ni \lambda^{''}\lexleft x \lambda
		}   (A \land \bigwedge\gtruthset{x}{\Gamma \setminus \lambda''} )
		\big )
		\less
		\big (
		B \land \bigwedge\gtruthset{x}{\Gamma \setminus \lambda} 
		\big )
		\bigg)^3 \,
		\bigg]^1.  
        \label{eq:big_formula}
	\end{eqnarray}
\end{lemma}

If $\cplless{A} = \cplless{B} = 0$, then $\hat F \in \Llp$; else, $\hat F \in \Lcpp$. 
Due to Definition \ref{def:satisfaction}, 
the statement is well defined.
Intuitively, formula $\hat F$ `describes' the updated sphere model using formulas in $\Llp$. 
The paired sets  $\lambda \in \pr{\Gamma}$ represent the disagreement sets between worlds in $\bigcup \sph{x}$ and the actual world $x$. Thus, each $\lambda$ can be thought of as representing a degree of disagreement w.r.t. the actual world. We can thus say that, e.g., $\disset{x}{u}{\Gamma} \lexleft x \lambda $, for some $u \in \bigcup \sph x$. 
Informally, whenever this happens we will say that ``$u$ disagrees from $x$ less or equal to $\lambda$''. 
For some $ z \in \bigcup S$ and $\lambda' \in \pr{\Gamma}$, $z \models \bigwedge\val{x}_{\Gamma \setminus \lambda'} $ means
that  $z$ satisfies all the formulas in  $\Gamma \setminus \lambda'$ which are also satisfied in $x$. 
Consequently, the set of formulas $z$ and $x$ disagree upon is a subset of $\lambda'$, from which we obtain   $\disset{x}{z}{\Gamma} \lexleft x \lambda' $. Considering worlds that satisfy  (conjunction of) formulas within $\Gamma \setminus \lambda'$ which $x$ satisfies then corresponds to selecting a world which
disagrees from $x$ less or equal to $\lambda'$. 
%
Then, $\hat F$ is a conjunction of formulas, one for each $\lambda \in \pr\Gamma$. Let us fix an arbitrary such $\lambda$, corresponding to the disagreement set $\disset{x}{u}{\Gamma}$ of some world $u \in \bigcup \sph x$.  
The formula within parentheses 2 in \eqref{eq:big_formula} above states that  for any choice of $\lambda' \in \pr \Gamma$ such that $\lambda '\lexlefts x \lambda$,
	there is no world 
 that disagrees less or equal to $\lambda'$
	with $x$ and forces $A$. 
	The formula in parenthesis 3 says that if there is a world $v\in \alpha \in \sph{x}$ 
    that disagrees less or equal  to $\lambda$ with $x$ 
    and that forces $B$, then there exists a world $y \in \alpha$ 
    which disagrees less or equal to 
    $\lambda$ with $x$ 
    and that forces $A$. This corresponds to the truth condition of the $\less $ operator. 
    Then, 
    for each choice of $\lambda$, either 
	there is a world $u'$  which satisfies ${A}$ and such that $\disset{x}{u'}{\Gamma}\lexleft x \disset{x}{u}{\Gamma}$, thus making 
	the formula in parenthesis 2 false or, if no such $u'$ exists, 
	then the formula in 3 needs to be satisfied. 

Next, we generalise  \Cref{lemma:paired:dis} to arbitrary formulas $F$ of $ \Lcpp$ and  conclude with soundness and completeness results. The proofs of both statements are in the Appendix.

\begin{proposition}
\label{prop:rec:d}
For  $A \in \Lcpp$, $\model$ (weakly) centered sphere model and $x$ world, it holds that $\model, x \modelsm{\mathsf{d}} A$ iff  $\model, x \models \hat A$. 
\end{proposition}

\begin{theorem}
    \label{thm:d_iff_less}
    For $A \in \Lcp$, $\validm{}^{\VL^\mathsf d} A$ iff $\validm{}^{\VL} \hat A$. 
\end{theorem}

\begin{figure}[t]
    \begin{center}
    \begin{tikzpicture}
        \draw[fill= gray!20] (-2, 0.4) ellipse (0.9cm and 0.7cm);
        \draw (2, 0)[]  ellipse (1.1cm and 1.1cm);
        \draw[fill= gray!20] (2, 0.4) ellipse (0.9cm and 0.7cm);
        \draw[fill= gray!10] (2.5, 0.4) ellipse (0.4cm and 0.5cm);
        \node[] at (-2, 0.4) (Llp) {\small$\Llp$};
        \node[] at (3.3, -0.6) (Lcpp) {\small$\Lcpp$};
        \node[] at (2.5, 0.4) (hat) {$ \hat A$};
        \node[] at (1.7, 0.4) (Lzero) {$ \Llp_{\varnothing}$};
        \node[] at (2,-0.7) (A) {$ A$};
        \draw[<->,densely dotted] (A) to [out=0,in=-90] (hat);
        \draw[<->,dashed] (Llp) to [out=20,in=160] (Lzero);
    \end{tikzpicture}
    \vspace{-0.4cm}
    \end{center}
    \caption{
    Relationship between sets of valid formulas. The circle labelled with $\Llp$ (resp. $\Lcpp$) represents the set of formulas of $\Llp$ (resp. $\Lcpp$) valid at (weakly) centred sphere models. $\Llp_\varnothing$ represents the subset of  valid formulas of $\Lcpp$ having empty \emph{cp}-sets. This set coincides with the set of valid formulas in $\Llp$ (Fact~\ref{lemma:from_less_to_lesscp}, dashed arrow). 
    The formulas $A \in \Lcpp \setminus \Llp_\varnothing$ are s.t.  $\cplless{A} >0$. 
    By \cref{thm:d_iff_less} (dotted arrow) every valid formula in $\Lcpp$ with non-empty \emph{cp}-sets is mapped into a fragment of the valid formulas of $\Llp_\varnothing$. 
    }
    \label{fig:relationship}
\end{figure}



By definition, the formula $\hat A$ is either a formula of $\Llp$ (in case all its \emph{cp}-sets are empty) or a formula expressible in the language of $\Llp$. Thus, by 
Fact~\ref{lemma:from_less_to_lesscp} and 
\cref{thm:d_iff_less}, the set of valid formulas of $\VLu d$ coincides with the set of valid formulas of $\VL$ (refer to \cref{fig:relationship}). Since Lewis defined axiom systems for logics $\VL$ in~\cite{lewis1973}, \cref{thm:d_iff_less} also provides us with an axiomatization for logics $\VLu d$. 

\section{Conclusions and future work}
\label{sec:conclusions}
In this paper we introduced an innovative way of evaluating \emph{cp}-counterfactuals, by updating the worlds of a sphere model according to the significance of formulas, which we measured by calculating the weights of specific sets of formulas  associated to each world in a system of spheres. The disagreement update, performed on (weakly) centered sphere models, gives rise to  logics $\VWu{d}$, $\VCu{d}$, which we showed to be complete w.r.t. Lewis' logics $\VW$ and $\VC$.

We plan to extend our analysis of \emph{cp}-counterfactuals 
in various directions. 
We wish to deepen the study of the comparative plausibility operator $\disless{\Gamma}$ (following~\cite{dalmonte:comparative22}, in non-nested models), and compare it to the preferential \emph{cp}-operator introduced in \cite{van2009everything}.
Furthermore, we wish to explore additional  constraints on \emph{cp}-sets, such as, e.g.,  allowing \cpcounts to occur within \emph{cp}-sets, or requiring that \emph{cp}-sets are consistent or closed under subformulas.   
Finally, we plan to include impossible worlds in our account,  following ideas from~\cite{weiss2017semantics}. Our approach would allow to distinguish between different kinds of impossible worlds, allowing for a nuanced  evaluation of counterfactuals within impossible states.



\clearpage

%

 \bibliographystyle{splncs04}
 \bibliography{cp-bib.bib}

\clearpage
\appendix
\section{Appendix}




\subsection*{Proofs from \cref{sec:weight}}

\begin{proof}[of \cref{prop:Symmetry}]
Let $\weight{x}{A}=(w^{ A}_0, \dots, w^{ A}_n)$,   $\weight{x}{ B}=(w^{ B}_0, .., w^{ B}_n)$,  $\weight{x}{\bar A}=(w^{\bar A}_0, .., w^{\bar A}_n)$ and  $\weight{x}{\bar B}=(w^{\bar B}_0, .., w^{\bar B}_n)$. 
From Definition~\ref{weight_of_formulas} it follows that, since worlds are classical, for any $l \leq n$ it holds that $w^{\bar A}_l < w^{\bar B}_l$ if and only if  $w^{ B}_l < w^{ A}_l$. 
Now suppose $B \xellefts x A$. Then, for the first $k \leq n$ such that $w^A_k \neq w^B_k $, it holds that  $w^{ A}_k < w^{ B}_k$. Thus, $w^{ \bar B}_k < w^{ \bar A}_k$, and  $\bar A \xellefts x \bar B$. If $B \eq x A$, then for all $k \leq n$, $w^A_k = w^B_k $, and thus $w^{\bar A}_k = w^{\bar B}_k $, hence $\bar A \eq x \bar B$. 
\qed
\end{proof}

\begin{proof}[of \ref{monotonicity}]
    We only cover the case when $\Gamma \subset \Delta$, whence  $\Delta \setdiff \Gamma \neq \varnothing$. Then, 
    $\weight{x}{\Gamma} = (\weight{x}{c_1},.., \weight{x}{c_n} )$ and  
    $\weight{x}{\Delta} = (\weight{x}{d_1},.., \weight{x}{d_{n+l}} )$, for some $0 < l$. 
    If there is some $A \in \Delta \setdiff \Gamma$ such that for some  $B \in \Gamma$ it holds that 
    $ B \xellefts x A$, 
    then by definition $\Gamma \lexlefts x \Delta $. 
    If instead for all formulas $A \in \Delta \setdiff \Gamma$ and all formulas $B \in \Gamma$ it holds $ A\xelleft x B$, then we conclude that $\Gamma \lexlefts x \Delta $ because $n < n+l$. 
    \qed 
\end{proof}

\subsection*{Proofs from \cref{sec:update}}

\begin{proof}[of \cref{update properties}]
Non-emptiness and nesting of $\upsph{d}{x}$ follow from Definition~\ref{def:update}.  
Moreover, $\weight{x}{\disset{x}{x}{\Gamma}}=\min\{\weight{x}{\disset{x}{y}{\Gamma}}\mid y \in \bigcup \sph{x}\} $. Thus, if $\model$ is weakly centered, for all $\alpha \in\ups{\Gamma}{x}{\mathsf{d}}(x)$ we have $x \in \alpha$, hence $\upmod{\Gamma}{x}{\mathsf{ d}}$ is weakly centered. If $\model$ is centered, then by definition
$\{x\} \in \ups{\Gamma}{x}{\mathsf{d}}(x)$. So $\upmod{\Gamma}{x}{\mathsf{d}}$ is centered. 
\end{proof}

\subsection*{Proofs from \cref{sec:completeness}}

The following notion of agreement set is the dual of the disagreement set, which we took as primitive to define our updates (Definition~\ref{def:prioritisations}). The following Lemma explicit their duality. Both the definition and the lemma will be used in the proof of \cref{lemma:paired:dis}. 

\begin{definition}
For $\model = \langle W, S, V\rangle$, worlds $x,y \in W$ and a \emph{cp}-set of formulas $\Gamma$, the \emph{agreement set of $y$ w.r.t. $\Gamma$, $x$} is the set $\agset{x}{y}{\Gamma} = \{G \in \Gamma \mid x \modelsm{} G \textit{ iff } y \modelsm{} G \}$. 
\end{definition}

\begin{lemma}
	\label{agree to disagree}
	For a (weakly) centered 
	model $\model$, worlds $x, y,z \in \bigcup \sph x$ and \emph{cp}-set $\Gamma$, it holds that:
	\begin{enumerate}
		\item \label{1st clause}$\Gamma = \agset{x}{y}{\Gamma} \cup \disset{x}{y}{\Gamma}$;
		\item \label{2nd clause} $\disset{x}{y}{\Gamma}\lexleft x \disset{x}{z}{\Gamma} \,  \text{iff} \, \agset{x}{z}{\Gamma} \lexleft x \agset{x}{y}{\Gamma}$.
	\end{enumerate}
\end{lemma}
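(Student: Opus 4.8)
The plan is to treat the two clauses separately: clause~\ref{1st clause} is an immediate consequence of classicality, while clause~\ref{2nd clause} reduces to a purely combinatorial fact, namely that componentwise complementation reverses a lexicographic order.

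For clause~\ref{1st clause}, I would fix an arbitrary $G \in \Gamma$ and use that worlds are classical, so exactly one of $y \models G$, $y \not\models G$ holds. Comparing with the fixed truth value of $x \models G$, either $x \models G \Leftrightarrow y \models G$, so that $G \in \agset{x}{y}{\Gamma}$, or $x \models G \Leftrightarrow y \not\models G$, so that $G \in \disset{x}{y}{\Gamma}$; these alternatives are mutually exclusive. Hence $\Gamma$ is the disjoint union $\agset{x}{y}{\Gamma} \cup \disset{x}{y}{\Gamma}$. I would record the disjointness explicitly, since it is exactly what clause~\ref{2nd clause} needs.

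The crux of clause~\ref{2nd clause} is to reformulate the order $\lexleft x$ on weight-sets as an ordinary lexicographic order on integer vectors. Let $v_1 \xelrights x \cdots \xelrights x v_r$ enumerate, in strictly decreasing weight, the distinct values $\weight{x}{G}$ taken by formulas $G \in \Gamma$, and let $\mu_j = \card{\{G \in \Gamma \mid \weight{x}{G} = v_j\}}$. For any $\Delta \subseteq \Gamma$ let its \emph{multiplicity vector} be $\vec{\delta} = (\delta_1, \dots, \delta_r)$ with $\delta_j = \card{\{G \in \Delta \mid \weight{x}{G} = v_j\}}$. Unwinding \cref{weight of sets} and \cref{def:weight:sets:compare}, the sorted list $\weight{x}{\Delta}$ is the block sequence $v_1^{\delta_1} \cdots v_r^{\delta_r}$, and a short case analysis on the first index $j^*$ where two multiplicity vectors differ (covering both the ``lighter element at the first difference'' case and the ``shorter list'' tie-break) shows that, for $\Delta, \Delta' \subseteq \Gamma$, one has $\Delta \lexleft x \Delta'$ iff $\vec{\delta} \le_{\mathrm{lex}} \vec{\delta'}$ in the standard lexicographic order on $\mathbb{N}^r$. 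I expect this reformulation to be the main obstacle: it is routine, but it requires matching the tie-breaking convention of \cref{def:weight:sets:compare} exactly.

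With this in hand the conclusion is immediate. By clause~\ref{1st clause}, $\agset{x}{y}{\Gamma}$ and $\disset{x}{y}{\Gamma}$ partition $\Gamma$, so their multiplicity vectors $\vec{a}^{\,y}$ and $\vec{d}^{\,y}$ satisfy $a^{\,y}_j + d^{\,y}_j = \mu_j$ for every $j$, and likewise for $z$; note $v_j, \mu_j$ depend only on $\Gamma$ and $x$, so all four vectors live over the same ground set. Now $\disset{x}{y}{\Gamma} \lexleft x \disset{x}{z}{\Gamma}$ iff $\vec{d}^{\,y} \le_{\mathrm{lex}} \vec{d}^{\,z}$. The first index $j^*$ at which $\vec{d}^{\,y}$ and $\vec{d}^{\,z}$ differ is also the first at which $\vec{a}^{\,y} = \vec{\mu} - \vec{d}^{\,y}$ and $\vec{a}^{\,z} = \vec{\mu} - \vec{d}^{\,z}$ differ, and there $d^{\,y}_{j^*} < d^{\,z}_{j^*}$ iff $a^{\,z}_{j^*} < a^{\,y}_{j^*}$; hence $\vec{d}^{\,y} \le_{\mathrm{lex}} \vec{d}^{\,z}$ iff $\vec{a}^{\,z} \le_{\mathrm{lex}} \vec{a}^{\,y}$, with the equality case $\vec{d}^{\,y} = \vec{d}^{\,z}$ corresponding to $\vec{a}^{\,y} = \vec{a}^{\,z}$ via the same identity. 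Translating back through the reformulation yields $\disset{x}{y}{\Gamma} \lexleft x \disset{x}{z}{\Gamma}$ iff $\agset{x}{z}{\Gamma} \lexleft x \agset{x}{y}{\Gamma}$, as required.
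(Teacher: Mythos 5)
Your proposal is correct, and it proves clause~\ref{2nd clause} by a genuinely different route than the paper. The paper's proof manipulates the sorted weight lists directly: assuming $\weight{x}{\agset{x}{z}{\Gamma}} \prel \weight{x}{\agset{x}{y}{\Gamma}}$, it removes the weights of the common formulas in $\agset{x}{z}{\Gamma}\cap\agset{x}{y}{\Gamma}$ from both lists (asserting this preserves $\prel$), rewrites the resulting set differences as differences of disagreement sets via clause~\ref{1st clause} (e.g.\ $\agset{x}{y}{\Gamma}\setminus\agset{x}{z}{\Gamma} = \disset{x}{z}{\Gamma}\setminus\disset{x}{y}{\Gamma}$), and then re-inserts the weights of the common disagreement formulas, again asserting invariance of $\prel$. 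You instead prove once that, on subsets of $\Gamma$, the order $\lexleft{x}$ coincides with the standard lexicographic order on multiplicity vectors over the distinct weight values of $\Gamma$, after which clause~\ref{2nd clause} is pure arithmetic: by the partition of clause~\ref{1st clause}, $\vec{a} = \vec{\mu} - \vec{d}$ componentwise, so the first differing index is shared and the inequality flips. What your route buys is precisely what the paper leaves unargued: its two ``removing/adding the same elements does not alter the order w.r.t.\ $\prel$'' steps are true for sorted lists but not immediate, and your vector characterization replaces them with one explicit case analysis that also makes the length tie-break of \cref{def:weight:sets:compare} (and the $\weight{x}{\varnothing}=(0)$ convention) visible; it is moreover reusable, yielding \cref{monotonicity} and the sphere equality behind \cref{lemma:ag:dis} essentially for free. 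The cost is the up-front reformulation, which you correctly identify as the only delicate step. Your clause~\ref{1st clause} matches the paper's, with the minor (accurate) improvement that classical bivalence alone suffices --- pairedness of $\Gamma$, which the paper invokes there, is not actually needed --- and, like the paper, you record the disjointness that clause~\ref{2nd clause} relies on.
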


\begin{proof}
	\Cref{1st clause} is an immediate consequence of the definition of agreement and disagreement sets, considering that $\Gamma$ is paired. We also note that by definition, $\agset{x}{y}{\Gamma}\cap \disset{x}{y}{\Gamma}=\varnothing$.
	To prove one direction of \Cref{2nd clause},  assume that $ \weight{x}{\agset{x}{z}{\Gamma}} \prel \weight{x}{\agset{x}{y}{\Gamma}}$. 
	We shall first prove that:
	\begin{eqnarray}
		\label{a} 
		\weight{x}{\agset{x}{z}{\Gamma}\setminus \agset{x}{y}{\Gamma}} \prel \weight{x}{\agset{x}{y}{\Gamma}\setminus \agset{x}{z}{\Gamma}}
	\end{eqnarray}
	If $\agset{x}{y}{\Gamma}\cap \agset{x}{z}{\Gamma} = \varnothing$ we immediately obtain that \eqref{a} holds. 
	Otherwise, there are formulas $B_1, \dots, B_n \in \agset{x}{z}{\Gamma} \cap  \agset{x}{y}{\Gamma} $. Removing  $\weight{x}{B_1}, \dots, \weight{x}{B_n}$ from both lists $\weight{x}{\agset{x}{z}{\Gamma}} $ and $\weight{x}{ \agset{x}{y}{\Gamma}}$ does not alter the order w.r.t. $\prel$, as we remove the same elements
	from both lists. 
	Thus, $ \weight{x}{\agset{x}{z}{\Gamma}} \setminus (\weight x B_1,\dots, \weight x B_n)  \prel \weight{x}{\agset{x}{y}{\Gamma} }\setminus (\weight x B_1,\dots, \weight x B_n) $, whence we conclude that   \eqref{a} holds. 
	
	From \Cref{1st clause} it easily follows that $(\agset{x}{y}{\Gamma}\setminus \agset{x}{z}{\Gamma}) = (\disset{x}{z}{\Gamma}\setminus \disset{x}{y}{\Gamma}) $ and   $(\agset{x}{z}{\Gamma}\setminus \agset{x}{y}{\Gamma})= (\disset{x}{y}{\Gamma} \setminus \disset{x}{z}{\Gamma}) $. From these and \eqref{a} we obtain: 
	\begin{eqnarray}
		\label{b}   \weight{x}{\disset{x}{y}{\Gamma} \setminus \disset{x}{z}{\Gamma} } \prel \weight{x}{ \disset{x}{z}{\Gamma}\setminus \disset{x}{y}{\Gamma}}
	\end{eqnarray}
	To conclude the proof, we need to show that the following holds:
	\begin{eqnarray}
		\label{c}     \weight{x}{\disset{x}{y}{\Gamma}  } \prel \weight{x}{ \disset{x}{z}{\Gamma}}
	\end{eqnarray}
	If $\disset{x}{y}{\Gamma} \cap \disset{x}{z}{\Gamma}=\varnothing$, then
	\eqref{c} immediately follows from \eqref{b}. Otherwise, there are  formulas $C_1, \dots, C_k \in \disset{x}{z}{\Gamma}\, \cap \, \disset{x}{y}{\Gamma} $. Adding  $\weight{x}{C_1}, \dots, \weight{x}{C_k}$ in both lists $\weight{x}{\disset{x}{z}{\Gamma}} $ and $\weight{x}{ \disset{x}{y}{\Gamma}}$ does not alter the order w.r.t. $\prel$, as we add the same number of elements to each list, with the same value. Thus, from \eqref{c} we obtain that $(\weight{x}{\disset{x}{y}{\Gamma} \setminus \disset{x}{z}{\Gamma} })\cup (\weight x C_1, \dots, \weight x C_k) \prel (\weight{x}{ \disset{x}{z}{\Gamma}\setminus \disset{x}{y}{\Gamma}}) \cup (\weight x C_1, \dots, \weight x C_k) $, from which \eqref{c} immediately follows. 
	
	The converse direction of Clause \ref{2nd clause} is proved similarly. 
    \qed
\end{proof}

\begin{proposition}
	\label{prop:compl}
	For any (weakly) centered sphere $\model$, and $x, v \in \bigcup \sph{x}$, 
	$\Gamma$ \emph{cp}-set and $\lambda \in \pr \Gamma$, it holds that:
	\begin{enumerate}
		
		\item \label{ap 1st}$ 
		\agset{x}{v}{\Gamma}=(\gtruthset{x}{\Gamma} \cap \val{v}_\Gamma)\cup (\gtruthset{x}{\Gamma}^c\cap \val{v}_\Gamma^c)$; 
		\item \label{ap 2nd}
		$\disset{x}{v}{\Gamma}=(\gtruthset{x}{\Gamma} \cap \val{v}_\Gamma^c)\cup (\gtruthset{x}{\Gamma}^c\cap \val{v}_\Gamma)$;
		\item \label{ap 3rd} 
		$\model , v\modelsm{\mathsf{d}}   \, \bigwedge \gtruthset{x}{\Gamma \setminus \lambda} $ \, if and only if \, $\disset{x}{v}{\Gamma}\subseteq \lambda$;
		\item \label{ap 4th} 
		$\model, v\modelsm{\mathsf{d}}  \bigwedge \gtruthset{x}{\lambda} $ \ if and only if $\lambda\subseteq \agset{x}{v}{\Gamma}$;
		\item \label{ap 5th} 
		if \, $\disset{x}{v}{\Gamma}\subseteq \lambda$ 
		\, then \,  $\weight{x}{\disset{x}{v}{\Gamma}} \prel \weight{x}{\lambda} $.
		
	\end{enumerate}
\end{proposition}

\begin{proof}[sketch]
	\Cref{ap 1st} states that the agreement set of $x$ and $v$ in $\Gamma$ corresponds to the set of formulas in $\Gamma$ which $x$ and $v$ both satisfy, plus the set of formulas in $\Gamma$ that both $x$ and $v$ do not satisfy. 
	Similarly, \Cref{ap 2nd} defines the disagreement set of $x$ and $v$ in $\Gamma$ as the formulas in $\Gamma$ that $x$ satisfies but $v$ does not, plus the set of formulas in $\Gamma$ that $v$ satisfies, but $x$ does not. Both statements  follow from  Definition~\ref{weight of sets} and the definition of $\gtruthset{x}{\Gamma}^c$. 
	To prove \Cref{ap 3rd}, observe that $v\modelsm{\mathsf{d}}  \bigwedge \gtruthset{x}{\Gamma \setminus \lambda} 
	\ $ if and only if for any formula $G \in \Gamma \setminus \lambda$, if $x\modelsm{\mathsf{d}}  G$ then $v \modelsm{\mathsf{d}}  G$ (and also if $x \not \modelsm{\mathsf{d}}  G$ then $v \not \modelsm{\mathsf{d}}  G$). 
	Since $x$ and $v$ assign the same value to formulas in $\Gamma \setminus \lambda$ then,  by Definition~\ref{def:prioritisations}, $\Gamma \setminus \lambda =\agset{x}{v}{\Gamma\setminus \lambda} $. Since  agreement and disagreement sets are complementary (\cref{agree to disagree}), we obtain $ \disset{x}{v}{\Gamma\setminus \lambda}= \varnothing$, which is equivalent to  $ \disset{x}{v}{\Gamma} \subseteq \lambda$. 
	The proof of \Cref{ap 4th} is similar to that of \eqref{ap 3rd} and \Cref{ap 5th} is a special case of \cref{monotonicity}.   
    \qed
\end{proof}

\begin{proof}[of \Cref{lemma:paired:dis}]
	For one direction, assume $\model, x \modelsm{\mathsf{d}} A \disless{\Gamma} B $. 
	Then, by definition, $ \upmod{\Gamma}{x}{\mathsf{d}}, x \modelsm{\mathsf{d}} A \disless{} B$ which
	means that for all $\hat{\alpha} \in  \upds{\Gamma}{x}{d}$,
	if $\hat{\alpha} \fe B$, then $\hat{\alpha} \fe A$. 
	Let $\lambda \in \pr\Gamma$ such that: 
	\begin{equation}
		\label{ap  eqn:cpl:dir1}
		\model, x \models   
		\bigwedge_{\pr \Gamma \ni \lambda' \lexlefts x \lambda} 
		\neg \Diamond 
		\big( 
		A \land  \bigwedge\gtruthset{x}{\Gamma \setminus \lambda'}  
		\big)
	\end{equation}
	We need to show that $x$ satisfies the formula within parentheses 3.  
	Consider any world $v \in \bigcup \sph x$ such that $\model,v\models 
	B
	\land \bigwedge\gtruthset{x}{\Gamma \setminus \lambda} $ (observe that, if no such world exists, then the formula in parentheses 3 is vacuously satisfied, and we are done). 
	Then, $\model,v\models 
	B
	$ and $\model,v \models \bigwedge\gtruthset{x}{\Gamma \setminus \lambda} $. By \ref{ap 3rd}  of Proposition \ref{prop:compl} we conclude that $\disset{x}{v}{\Gamma} 
	\subseteq \lambda $. Therefore, by \cref{ap 5th} of Proposition \ref{prop:compl}, and by definition, we have that $\disset{x}{v}{\Gamma}
	\lexleft x \lambda $. 
	Since $\model,v \models 
	B
	$ and $v \in \bigcup \sph x$, then for some sphere $\hat \alpha \in \upsph{\mathsf{d}}{x}$ we have that $v \in  \hat \alpha \in \upds{\Gamma}{x}{d}$, and thus
	$\hat \alpha \fe 
	B
	$. Choose the smallest such sphere $\hat \alpha$. This sphere will contain only worlds $z$ such that $\disset{x}{z}{\Gamma} \lexleft x \disset{x}{v}{\Gamma}$. 
	By assumption, $\hat \alpha \fe A$, that is, there is a $y \in \hat \alpha$ such that $y \models A$, and   $\disset{x}{y}{\Gamma} \lexleft{x} \disset{x}{v}{\Gamma}$. 
	However, it cannot be that 
	$\disset{x}{y}{\Gamma} \lexlefts x \disset{x}{v}{\Gamma}$, because by taking $\lambda' = \weight{x}{\disset{x}{y}{\Gamma}}$ we would obtain that $x \models  \Diamond(
	A
	\land  (\bigwedge\gtruthset{x}{\Gamma \setminus \lambda'} )) $, contradicting (\ref{ap eqn:cpl:dir1}). Thus, we have that 
	$\disset{x}{y}{\Gamma}\eq{x}\disset{x}{v}{\Gamma}$.	Since $v \models B
	\land \bigwedge\gtruthset{x}{\Gamma \setminus \lambda}
	$ and $v \in \bigcup \sph x$, we also have that for some $\alpha \in \sph x$, $v \in \alpha $ and $ \alpha \fe 
	B
	\land \bigwedge\gtruthset{x}{\Gamma \setminus \lambda}$. 
	To conclude the proof, it remains to show that the following holds: 
	\begin{equation}
		\label{ap  eq:to_show}
		\alpha \fe \bigvee_{\pr\Gamma\ni \lambda^{''}\lexleft x\lambda}  
		(A \land \bigwedge\gtruthset{x}{\Gamma \setminus \lambda''}).
	\end{equation}
	Take $\lambda''=\disset{x}{y}{\Gamma}$. 
	Hence $y \models \bigwedge\gtruthset{x}{\Gamma \setminus \lambda''}$, by \ref{ap 3rd} of Proposition~\ref{prop:compl}. 
	Thus  $y \models 
	A
	\land \bigwedge \gtruthset{x}{\Gamma \setminus \lambda''}$, 
	from which we obtain that: 
	$$
	y \models \bigvee_{\pr\Gamma\ni \lambda^{''}\lexleft x\lambda}   
	(A \land \bigwedge\gtruthset{x}{\Gamma \setminus \lambda''}).
	$$
	Since $y \in \bigcup \upds{\Gamma}{x}{d} = \bigcup \sph x$, there is a $\beta \in \sph x$ such that $y \in \beta$. However, since $\disset{x}{y}{\Gamma}\eq x \disset{x}{v}{\Gamma}$
	and since $y,v \in \hat \alpha$, we have either $ \beta \subseteq \alpha$ or $\beta = \alpha$. In both cases we conclude that $y \in \alpha$, whence \eqref{ap eq:to_show} holds.

	 To prove the converse direction of the Lemma, assume that:
	\begin{align*}
		\model, x \models &
		\bigwedge_{\lambda \in \pr\Gamma } 
		\bigg[^1    
		\bigg(^2 
		\bigwedge_{\pr \Gamma \ni \lambda' \lexlefts x\lambda} 
		\neg \Diamond 
		\big( 
		A \land  \bigwedge\gtruthset{x}{\Gamma \setminus \lambda'}  
		\big)
		\bigg)^2
		\to     \\
		&\to \bigg (^3 
		\big (
		\bigvee_{
			\pr\Gamma \ni \lambda^{''}\lexleft x\lambda
		}  (A \land \bigwedge\gtruthset{x}{\Gamma \setminus \lambda''} )
		\big )
		\less
		\big (
		B \land \bigwedge\gtruthset{x}{\Gamma \setminus \lambda} 
		\big )
		\bigg)^3 \,
		\bigg]^1.  
	\end{align*}
	
	Let us take an arbitrary $\hat \alpha \in \upds{\Gamma}{x}{d}$ such that 
	$  \hat \alpha \fe B$, i.e., let us suppose that there is a $v \in \hat \alpha$ such that $v \models B$. We shall prove that $\hat \alpha \fe A$, i.e., we shall find an  $y \in \hat \alpha$ such that $y \models A$. We reason by case distinction. 
	First, suppose that for $\lambda = \disset{x}{v}{\Gamma}$. Then, it holds that: 
	$$
	\model, x \not \models \, 
	\bigwedge_{\pr \Gamma \ni \lambda' \lexlefts x\lambda} 
	\neg \Diamond 
	\big( 
	A \land  \bigwedge\gtruthset{x}{\Gamma \setminus \lambda'}  
	\big).
	$$
	This means that, for some $\lambda' \lexlefts \lambda$, there is a $y \in \bigcup \sph x$ such that $y \models A \land \bigwedge\gtruthset{x}{\Gamma \setminus \lambda'} $. Then, by \ref{ap 3rd} of \cref{prop:compl} we have $ \disset{x}{y}{\Gamma} \subseteq \lambda' $. We conclude that $\disset{x}{y}{\Gamma} \lexlefts x\disset{x}{v}{\Gamma}$. Thus, by Definition~\ref{def:update}, $y \in \hat \alpha$, from which we obtain  $\hat \alpha \fe A$, thus concluding the proof. Next, always for $\lambda = \disset{x}{v}{\Gamma}$ suppose: 
	$$
	\model, x \models \, 
	\bigwedge_{\pr \Gamma \ni \lambda' \lexlefts x\lambda} 
	\neg \Diamond 
	\big( 
	A \land  \bigwedge\gtruthset{x}{\Gamma \setminus \lambda'}  
	\big).
	$$
	Then, using our assumption, we obtain that:
	$$
	\model, x \models \, 
	\big (
	\bigvee_{
		\pr\Gamma \ni \lambda^{''}\lexleft x\lambda
	}  (A \land \bigwedge\gtruthset{x}{\Gamma \setminus \lambda''} )
	\big )
	\less
	\big (
	B \land \bigwedge\gtruthset{x}{\Gamma \setminus \lambda} 
	\big ).
	$$
	
	Since $\lambda = \disset{x}{v}{\Gamma}$, by \ref{ap 3rd} of \Cref{prop:compl} we have that $v\models \bigwedge\gtruthset{x}{\Gamma \setminus \lambda} $.
	Moreover, by assumption, $v \models B$, and therefore there is an $\alpha \in \bigcup \sph x$ such that $v \in \alpha $ and $\alpha \fe B \wedge (\bigwedge\gtruthset{x}{\Gamma \setminus \lambda} )$. Then, always by assumption, we have that:
	$$
	\alpha \fe 
	\big (
	\bigvee_{
		\pr\Gamma \ni \lambda^{''}\lexleft x\lambda
	}  A \land \bigwedge\gtruthset{x}{\Gamma \setminus \lambda''} 
	\big ). 
	$$
	Thus, for some $y \in \alpha$  we have that $y \models A \land \bigwedge\gtruthset{x}{\Gamma \setdiff \lambda''} $. 
	By Proposition~\ref{prop:compl}, we have that $ \disset{x}{y}{\Gamma} \subseteq \lambda''$ and $\disset{x}{y}{\Gamma} \lexleft x  \disset{x}{v}{\Gamma}$. 
	By construction (Definition~{def:update}) $y \in \hat \alpha$, and the proof is completed.
\end{proof}


\begin{proof}[sketch, of \ref{prop:rec:d}]
    If $\cplless{A} = 0$, set  $\hat A = A \in \Llp$, and the result immediately follows. 
    Otherwise, if $\cplless{A} = n > 0$, we `decompose' formula $A$, evaluating its satisfiability at $\model, x$ (and possibly other worlds in the model). Whenever we encounter a subformula $B \disless{\Gamma} C$ with $\Gamma \neq \varnothing$, we apply \cref{lemma:paired:dis}. After $n$ iterations of \cref{lemma:paired:dis}, we obtain a formula $\hat A \in \Llp$ equisatisfiable with $A$. \qed
\end{proof}

\begin{proof}[of \cref{thm:d_iff_less}]
    To prove one direction, take an arbitrary $A\in \Lcpp$ such that  $\validm{}^{\mathsf{VL}^\mathsf d}A$. 
    Then, for arbitrary $\model$, $x$, it holds that $\model, x \modelsm{\mathsf d}  A$ and, by \cref{lemma:eq}, there is a formula $B \in \Lcpp$ such that $\model, x \modelsm{\mathsf d}  B$. 
    By \cref{prop:rec:d},  $\model, x \models \hat B$, for $\hat B \in \Llp$.  
    Then, since the above holds for arbitrary models,  conclude that $\validm{}^{\VL} \hat B$ and, again by \cref{lemma:eq}, we have that there is $\hat A \in \Lcp$ such that  $\validm{}^{\VL} \hat A$. The converse direction is proved similarly. 
    \qed
\end{proof}

\end{document}